\DeclareMathOperator{\maxrest}{maxrest}
\DeclareMathOperator{\maxneed}{maxneed}
\title{Heaven \& Hell II: Scale Laws and Robustness in One-Step Heaven-Hell Consensus\thanks{GUDraBIT Research Technical Report TR-2025-02.}}
\author{
  Nnamdi Aghanya \\ \texttt{nnamdi.aghanya@cranfield.ac.uk}
  \and
  Romain Leemans \\ \texttt{romain.leemans@protonmail.com}
}
\date{\today}
\newtheorem{theorem}{Theorem}
\newtheorem{lemma}[theorem]{Lemma}
\newtheorem{corollary}[theorem]{Corollary}
\theoremstyle{definition}
\begin{document}
\maketitle
\begin{abstract}
\noindent
\noindent
We study the Heaven-Hell dynamics, a model for network consensus. A known result establishes an exact one-step convergence threshold for systems with a single uniform hub: the per-node inbound hub weight $W$ suffices if and only if $W \ge \maxrest$, where $\maxrest$ is the maximum non-hub inbound mass. In this work, we develop \emph{scale laws} and \emph{operational refinements} that make this threshold robust to tie-breaking policies, node-specific tolerances, targeted seeding, multiple hubs, and asynchronous update schedules. Our main contributions are: (i) a conservation-law perspective that yields a majority-form characterisation and monotonicity principles; (ii) a parameterised tie policy with exact characterisations for \textsc{TieGlory}/\textsc{TieGnash}/\textsc{TieStay}; (iii) tighter \emph{pointwise deg–max} bounds that strictly improve the classical $\mathrm{indeg}\cdot\mathrm{wmax}$ worst-case guarantee; (iv) a one-pass fairness result for asynchronous updates; and (v) practically checkable sufficient conditions for seeded one- and two-step convergence, including multi-hub budget splits. All proofs are mechanised in Coq, and experiments on rings, grids, BA/scale-free graphs, and heterogeneous weighted graphs validate the tightness and show substantial gap closures over prior bounds~\footnote{\href{https://github.com/gudfit/HeavenHellII}{https://github.com/gudfit/HeavenHellII}}..
\end{abstract}

\section{Introduction}
The Heaven-Hell (HH) model describes consensus dynamics on weighted directed graphs, which is a binary State Consensus on weighted directed graphs. In the HH model, each node updates its state based upon a weighted majority vote of all neighbours; however, there exists a single designated hub node that will enforce a Glory state and cause convergence of all other nodes. The paper builds on previously established work that described exact one-step convergence thresholds for uniform-hub systems\cite{AghanyaLeemans2025}. 

This paper establishes scale laws that increase robustness against realistic variations in the system, such as node-specific tolerances, parameterised tie-breaking policies, multi-hub seeding strategies and asynchronous update schedules. Using a conservation-law perspective, we derived majority form characterisation and monotonicity principles that provide tighter pointwise bounds, exact thresholds, and conditions for both seeded and multi-step convergence. We provide empirical validations of the results using diverse families of graphs, including rings, grids, scale-free networks, and heterogeneous weighted graphs. These results confirmed the tightness of our bounds and demonstrated significant improvements over classical worst-case guarantees.

\section{Preliminaries: The Heaven-Hell Model}
We work on a finite vertex set $V$ with a distinguished hub $g\in V$. Let $w:V\times V\to\mathbb{N}$ be nonnegative weights representing directed edges, and let $\tau:V\to\mathbb{N}$ be an optional tolerance per node. A configuration (state) is a function $s:V\to\{\textsc{Glory},\textsc{Gnash}\}$. We define the inbound weight to a node $v$ from the hub as $\mathrm{hub\_weight}(v)=\sum_{u=g} w(u,v)$ and from all other nodes as \emph{rest weight}, $\mathrm{rest\_weight}(v)=\sum_{u\ne g} w(u,v)$. In the uniform-hub setting, we assume $\mathrm{hub\_weight}(v)=W$ for all $v\ne g$.

A single update step first forces the hub $g$ to \textsc{Glory}. Then, each node $v\ne g$ adopts the \textsc{Glory} state if and only if its score from the \textsc{Glory}-state neighbours dominates its score from the \textsc{Gnash}-state neighbours, with the handling of ties dependent on the active policy.

\paragraph{Notation.} We denote by $\mathrm{score}_G(s,v)$ and $\mathrm{score}_N(s,v)$ the weighted inbound score to $v$ from the \textsc{Glory} and \textsc{Gnash} nodes, respectively, after forcing $g$ to \textsc{Glory}. We write $\mathrm{total\_in}(v)=\sum_u w(u,v)$. The one-step next state under tolerance $\tau$ and a \textsc{TieGlory} policy is denoted $\mathrm{next\_heaven\_tau}(s,v)$; see \S\ref{sec:ties} for details on policies.

\section{A Conservation Law and a Majority-Form View}
A simple but powerful identity underpins our analysis. The total inbound weight to a node is conserved, perfectly partitioning between the scores from nodes in \textsc{Glory} and \textsc{Gnash}.

\begin{lemma}[Score Conservation]
\label{lem:score_conservation}
For any state $s$ and node $v$, the scores after forcing the hub are related by:
\[
 \mathrm{score}_G(s,v) + \mathrm{score}_N(s,v) = \mathrm{total\_in}(v).
\]
\end{lemma}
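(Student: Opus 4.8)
The plan is to prove this by unfolding the definitions of the three score quantities and observing that they form a partition of the inbound edge set. The key observation is that after forcing the hub $g$ to \textsc{Glory}, every node $u \in V$ is assigned exactly one of the two states \textsc{Glory} or \textsc{Gnash}, so the predicate ``$s'(u) = \textsc{Glory}$'' and its negation ``$s'(u) = \textsc{Gnash}$'' partition the vertex set $V$ into two disjoint, exhaustive blocks, where $s'$ denotes the state after the hub-forcing step.

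First I would write each score as a sum over $u \in V$ of $w(u,v)$ restricted to the appropriate state: concretely, $\mathrm{score}_G(s,v) = \sum_{u : s'(u) = \textsc{Glory}} w(u,v)$ and $\mathrm{score}_N(s,v) = \sum_{u : s'(u) = \textsc{Gnash}} w(u,v)$, while $\mathrm{total\_in}(v) = \sum_{u \in V} w(u,v)$ by definition. Then I would invoke the fact that $\{\textsc{Glory}, \textsc{Gnash}\}$ is a two-element type, so for each fixed $u$ exactly one of the two guards holds; this lets me split the single sum $\sum_{u \in V} w(u,v)$ into the two sub-sums according to the value of $s'(u)$, yielding the claimed identity by additivity of finite sums over a partition.

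The main obstacle, such as it is, lies not in the mathematics but in the bookkeeping of the hub-forcing step: the scores are defined after $g$ is forced to \textsc{Glory}, so I must be careful that the partition uses the forced state $s'$ rather than the raw input state $s$, and that the hub's own contribution $w(g,v)$ is correctly tallied into $\mathrm{score}_G$ (since the forced hub is in \textsc{Glory}) while still being included in $\mathrm{total\_in}(v)$. Since the forcing only alters the hub's label and the partition-into-two-states argument is insensitive to which particular state each node carries, this is a matter of applying the split to $s'$ uniformly.

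In the Coq mechanisation I would expect this to reduce to a standard lemma on splitting a \texttt{fold} or big-sum over a list or finite set according to a boolean decision procedure on the state, of the form $\sum_{u} f(u) = \sum_{u : P(u)} f(u) + \sum_{u : \neg P(u)} f(u)$, instantiated with $P(u) := (s'(u) = \textsc{Glory})$ and $f(u) := w(u,v)$. No inequality reasoning or case analysis on $W$, $\tau$, or tie policies is needed here; the identity is purely structural and holds for every state and every node, which is exactly why it can serve as the conservation backbone for the monotonicity and majority-form results that follow.
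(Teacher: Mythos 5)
Your proposal is correct and matches the paper's proof in essence: both arguments rest on the fact that after hub-forcing every node lies in exactly one of the two states, so the total inbound weight splits exactly into the \textsc{Glory} and \textsc{Gnash} sub-sums (the paper phrases this via indicators summing to $1$ and distributing, you via partitioning the summation index set, which is the same argument). Your added care about using the forced state $s'$ rather than $s$ is exactly the bookkeeping the paper's proof also observes.
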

\begin{proof}
By definition, $\mathrm{total\_in}(v) = \sum_{u \in V} w(u,v)$. The scores are given by weighted sums over indicators: $\mathrm{score}_G(s,v) = \sum_{u \in V} w(u,v) \cdot \mathbb{I}[s'(u) = \textsc{Glory}]$ and $\mathrm{score}_N(s,v) = \sum_{u \in V} w(u,v) \cdot \mathbb{I}[s'(u) = \textsc{Gnash}]$, where $s'$ is the state after forcing the hub. Since any node is either in \textsc{Glory} or \textsc{Gnash}, the indicators satisfy $\mathbb{I}[s'(u) = \textsc{Glory}] + \mathbb{I}[s'(u) = \textsc{Gnash}] = 1$ for all $u$. The result follows from distributing the sum.
\end{proof}

This conservation law allows us to re-characterise the update rule in a majority form.

\begin{theorem}[Majority Form with Tolerance]
\label{thm:majority-tau}
For any node $v\ne g$, its next state is \textsc{Glory} if and only if:
\[
 \mathrm{next\_heaven\_tau}(s,v)=\textsc{Glory}
 \quad\Longleftrightarrow\quad
 2\cdot \mathrm{score}_G(s,v) + \tau(v) \ge \mathrm{total\_in}(v).
\]
\end{theorem}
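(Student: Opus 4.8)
The plan is to reduce the stated majority-form criterion to the underlying definition of the update rule via the conservation identity from Lemma~\ref{lem:score_conservation}. The update rule (under the \textsc{TieGlory} policy with tolerance) specifies that $v$ adopts \textsc{Glory} exactly when its Glory-score, boosted by the tolerance slack, is at least its Gnash-score; that is, the defining inequality should read $\mathrm{score}_G(s,v) + \tau(v) \ge \mathrm{score}_N(s,v)$, with the $\ge$ (rather than strict $>$) encoding that ties resolve to \textsc{Glory}. So the first step is to unfold $\mathrm{next\_heaven\_tau}(s,v)=\textsc{Glory}$ to this raw comparison between $\mathrm{score}_G$ and $\mathrm{score}_N$.

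Next I would substitute the conservation law to eliminate $\mathrm{score}_N$. By Lemma~\ref{lem:score_conservation} we have $\mathrm{score}_N(s,v) = \mathrm{total\_in}(v) - \mathrm{score}_G(s,v)$. Plugging this into the raw comparison $\mathrm{score}_G(s,v) + \tau(v) \ge \mathrm{score}_N(s,v)$ turns the right-hand side into $\mathrm{total\_in}(v) - \mathrm{score}_G(s,v)$, and collecting the $\mathrm{score}_G$ terms on the left yields $2\cdot\mathrm{score}_G(s,v) + \tau(v) \ge \mathrm{total\_in}(v)$. Since every manipulation here is an equivalence (adding the same quantity to both sides of an inequality, substituting an equal quantity), the biconditional is preserved throughout, and I obtain the claimed characterisation directly.

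The one genuinely delicate point is the handling of the tie case, i.e.\ pinning down whether the defining inequality carries $\ge$ or $>$ and confirming that this matches the \textsc{TieGlory} semantics. Because all quantities are natural numbers, a strict-versus-nonstrict discrepancy would shift the doubled inequality by exactly one unit and change the threshold at the boundary $2\cdot\mathrm{score}_G(s,v) + \tau(v) = \mathrm{total\_in}(v)$, so I would be careful to read off the tie convention from the definition in \S\ref{sec:ties} and verify the endpoint explicitly. I expect this boundary bookkeeping — rather than any structural difficulty — to be the main obstacle, which is precisely why a mechanised check in Coq is valuable: it forces the tie case to be discharged rather than silently assumed.

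I would conclude by noting that the hub-forcing step is already absorbed into the scores: since $\mathrm{score}_G$ and $\mathrm{score}_N$ are defined relative to the post-forcing state $s'$, no separate argument about the hub's contribution is needed, and the conservation law supplies the total exactly once. This keeps the proof a short chain of equivalences anchored on Lemma~\ref{lem:score_conservation}.
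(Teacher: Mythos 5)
Your proposal is correct and follows essentially the same route as the paper's proof: unfold the \textsc{TieGlory} rule to $\mathrm{score}_N(s,v) \le \mathrm{score}_G(s,v) + \tau(v)$, substitute $\mathrm{score}_N$ via Lemma~\ref{lem:score_conservation}, and rearrange to the doubled inequality. Your added care about the tie convention and the hub-forcing being absorbed into the scores is sound bookkeeping, matching the paper's implicit treatment.
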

\begin{proof}
For a non-hub node $v$, the \textsc{TieGlory} update rule is $\mathrm{next\_heaven\_tau}(s,v) = \textsc{Glory} \iff \mathrm{score}_N(s,v) \le \mathrm{score}_G(s,v) + \tau(v)$. We substitute $\mathrm{score}_N(s,v)$ using Lemma \ref{lem:score_conservation}: $\mathrm{total\_in}(v) - \mathrm{score}_G(s,v) \le \mathrm{score}_G(s,v) + \tau(v)$. Rearranging the terms yields the equivalent condition $\mathrm{total\_in}(v) \le 2 \cdot \mathrm{score}_G(s,v) + \tau(v)$.
\end{proof}

\section{Monotonicity Principles}
The majority-form characterisation makes monotonicity properties immediate.

\paragraph{State Monotonicity.} If a state $t$ has weakly more \textsc{Glory} nodes than a state $s$, then for any node $v$, its \textsc{Glory}-score weakly increases and its \textsc{Gnash}-score weakly decreases. Consequently, any node that transitions to \textsc{Glory} from state $s$ will also do so from state $t$. The formal proof is deferred to Lemma \ref{lem:state_monotonicity_appendix}.

\paragraph{Tolerance Monotonicity.} If a tolerance function $\tau'$ pointwise dominates another, $\tau'(v)\ge \tau(v)$ for all $v$, then the required hub weight for convergence can only decrease: $\maxneed(\tau') \le \maxneed(\tau)$, where $\maxneed(\tau)=\max_{v\ne g}\big(\mathrm{rest\_weight}(v)-\tau(v)\big)$. Thus, increasing tolerances helps achieve consensus.

\section{Tie Policies and Exact Characterisations}
\label{sec:ties}
We parameterise the tie-breaking policy by three rules: \textsc{TieGlory}, \textsc{TieGnash}, and \textsc{TieStay}. Let $T(v)=\mathrm{score}_G(s,v)+\tau(v)$ and $R(v)=\mathrm{score}_N(s,v)$ be the total \textsc{Glory} and \textsc{Gnash} scores for a node $v \neq g$ after forcing the hub.

\begin{lemma}[Tie Policies]
The next state of a node $v \neq g$ is determined as follows:
\begin{itemize}
\item \textsc{TieGlory}: $\mathrm{next}=\textsc{Glory}$ iff $R(v) \le T(v)$.
\item \textsc{TieGnash}: $\mathrm{next}=\textsc{Glory}$ iff $R(v) < T(v)$.
\item \textsc{TieStay}: $\mathrm{next}=\textsc{Gnash}$ if $T(v) < R(v)$, \textsc{Glory} if $R(v) < T(v)$, otherwise stay $s(v)$.
\end{itemize}
\end{lemma}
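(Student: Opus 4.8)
The plan is to treat this statement as a definitional case analysis rather than a quantitative estimate: each policy is a rule that resolves the single comparison between the aggregate Glory side $T(v)=\mathrm{score}_G(s,v)+\tau(v)$ and the Gnash side $R(v)=\mathrm{score}_N(s,v)$, and the three policies differ \emph{only} in how they break the tie $T(v)=R(v)$. Since $w$ takes values in $\mathbb{N}$ and $\tau(v)\in\mathbb{N}$, both $T(v)$ and $R(v)$ are natural numbers, so the order on $\mathbb{N}$ yields an exhaustive trichotomy: $T(v)>R(v)$, or $T(v)=R(v)$, or $T(v)<R(v)$. First I would record that in the two strict branches every policy returns the same verdict---\textsc{Glory} when $T(v)>R(v)$ and \textsc{Gnash} when $T(v)<R(v)$---because the tie-resolution clause is never triggered. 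This reduces the whole lemma to checking the tie branch against each policy's tie convention.

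Next I would discharge each bullet by unfolding the corresponding policy across the three branches and collecting exactly the branches that yield \textsc{Glory}. For \textsc{TieGlory}, a tie is sent to \textsc{Glory}, so the Glory outcomes are $\{T(v)>R(v)\}\cup\{T(v)=R(v)\}$, which collapses to the single inequality $R(v)\le T(v)$. For \textsc{TieGnash}, a tie is sent to \textsc{Gnash}, so the Glory outcomes shrink to the strict branch $\{T(v)>R(v)\}$, i.e. $R(v)<T(v)$. For \textsc{TieStay}, the two strict branches give \textsc{Gnash} (when $T(v)<R(v)$) and \textsc{Glory} (when $R(v)<T(v)$), while the tie branch returns the incumbent $s(v)$, which is precisely the three-way statement. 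A useful consistency check is that the \textsc{TieGlory} characterisation $R(v)\le T(v)$ is exactly the hypothesis driving the proof of Theorem~\ref{thm:majority-tau}: substituting $R(v)=\mathrm{total\_in}(v)-\mathrm{score}_G(s,v)$ via Lemma~\ref{lem:score_conservation} recovers the majority form $2\,\mathrm{score}_G(s,v)+\tau(v)\ge\mathrm{total\_in}(v)$, so the two statements are mutually derivable.

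There is no deep obstacle here; the content is a faithful unfolding of the three update rules. The only point demanding care is the boundary bookkeeping---ensuring the inclusive inequality $\le$ is attached to \textsc{TieGlory} (ties counted as \textsc{Glory}) and the strict inequality $<$ to \textsc{TieGnash} (ties counted as \textsc{Gnash}), so that no off-by-one slip conflates the two. In the Coq development this is the step where one would perform a case split (\texttt{destruct}) on the comparison of $R(v)$ and $T(v)$ and close each of the three resulting goals by \texttt{reflexivity} or a linear-arithmetic tactic, so the mechanised proof mirrors the trichotomy above branch for branch.
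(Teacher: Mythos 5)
Your proof is correct and matches the paper's treatment: the paper states this lemma as a direct unfolding of the parameterised update rule (offering no further argument), and your trichotomy-plus-case-analysis is exactly that definitional unfolding, including the correct assignment of $\le$ to \textsc{TieGlory} and $<$ to \textsc{TieGnash}. Your consistency check against Theorem~\ref{thm:majority-tau} also correctly reflects how the \textsc{TieGlory} rule is used there.
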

These distinctions are critical in knife-edge cases and when stability (\textsc{TieStay}) is preferred over a bias toward \textsc{Glory}.

\section{Scale Laws: Exact Thresholds and Tighter Bounds}
We now establish the main threshold results for one-step convergence.

\begin{theorem}[Uniform Hub with Tolerance (Exact)]
\label{thm:uniform-tau}
Assume $\mathrm{hub\_weight}(v)=W$ for all $v\ne g$. One-step convergence to all-\textsc{Glory} from any initial state is guaranteed if and only if:
\[
 W \ge \maxneed(\tau) := \max_{v\ne g}\big(\mathrm{rest\_weight}(v) - \tau(v)\big).
\]
\end{theorem}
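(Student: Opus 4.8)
The plan is to collapse the universal quantifier ``from any initial state'' down to a single worst-case configuration, and then read the per-node threshold directly off the Majority Form. The natural candidate for the worst case is the \emph{all-\textsc{Gnash}} initial state $s_0$, in which every non-hub node starts in \textsc{Gnash}; after the hub is forced to \textsc{Glory}, this is precisely the configuration with the fewest \textsc{Glory} nodes. First I would compute the two scores at $s_0$: under the uniform-hub assumption the only \textsc{Glory} contributor to any $v \neq g$ is the hub, so $\mathrm{score}_G(s_0,v) = W$, and by Score Conservation (Lemma~\ref{lem:score_conservation}) together with $\mathrm{total\_in}(v) = W + \mathrm{rest\_weight}(v)$ we get $\mathrm{score}_N(s_0,v) = \mathrm{rest\_weight}(v)$.

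For the sufficiency direction ($\Leftarrow$), suppose $W \ge \maxneed(\tau)$. Applying the Majority Form (Theorem~\ref{thm:majority-tau}) at $s_0$, node $v$ updates to \textsc{Glory} iff $2W + \tau(v) \ge W + \mathrm{rest\_weight}(v)$, which simplifies to $W \ge \mathrm{rest\_weight}(v) - \tau(v)$; this holds for every $v \neq g$ by hypothesis, so $s_0$ converges to all-\textsc{Glory} in one step. To promote this to an \emph{arbitrary} initial state $s$, I would invoke State Monotonicity (Lemma~\ref{lem:state_monotonicity_appendix}): since $s$ has weakly more \textsc{Glory} nodes than $s_0$, every node that transitions to \textsc{Glory} from $s_0$ also transitions from $s$. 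As all non-hub nodes transition from $s_0$, they all transition from $s$, giving all-\textsc{Glory}.

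For necessity ($\Rightarrow$), I would simply observe that if one-step convergence holds from every initial state, then in particular it holds from $s_0$. Hence each non-hub node must reach \textsc{Glory} from $s_0$, so the inequality $W \ge \mathrm{rest\_weight}(v) - \tau(v)$ derived above must hold for \emph{all} $v \neq g$; taking the maximum over $v$ yields $W \ge \maxneed(\tau)$, as required.

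The main obstacle I anticipate is justifying that a \emph{single} configuration $s_0$ is simultaneously adversarial for every node, rather than requiring a different worst state per node. This is exactly what State Monotonicity supplies: globally minimising the set of \textsc{Glory} nodes minimises each node's \textsc{Glory}-score at once, so the individual thresholds $\mathrm{rest\_weight}(v) - \tau(v)$ can be aggregated into a single $\max$ with no loss. Once that reduction is in place, the remaining work is only routine substitution into the Majority Form and the conservation identity.
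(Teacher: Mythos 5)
Your proposal is correct and follows essentially the same route as the paper's proof: both reduce to the all-\textsc{Gnash} worst-case state, extract the per-node inequality $W + \tau(v) \ge \mathrm{rest\_weight}(v)$ (you via the Majority Form, the paper via the raw update rule --- the same thing), aggregate it into $W \ge \maxneed(\tau)$, and use State Monotonicity to lift sufficiency from the worst case to arbitrary initial states. The only cosmetic difference is that you route the per-node condition through Theorem~\ref{thm:majority-tau} rather than quoting the \textsc{TieGlory} rule directly.
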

\begin{proof}
The system converges in one step if and only if for every state $s$ and every node $v \neq g$, $\mathrm{next\_heaven\_tau}(s,v) = \textsc{Glory}$. This is equivalent to requiring that the condition holds even in the worst-case initial state, where all non-hub nodes start in \textsc{Gnash}. In this state, the \textsc{Glory} score for a node $v$ is exactly its hub weight, $\mathrm{score}_G(s,v) = \mathrm{hub\_weight}(v) = W$, and its \textsc{Gnash} score is its rest weight, $\mathrm{score}_N(s,v) = \mathrm{rest\_weight}(v)$. The update rule $\mathrm{score}_N \le \mathrm{score}_G + \tau(v)$ becomes $\mathrm{rest\_weight}(v) \le W + \tau(v)$ for all $v \neq g$. This per-node condition is equivalent to the aggregated threshold $W \ge \max_{v \neq g}(\mathrm{rest\_weight}(v) - \tau(v))$, as formally shown in Lemma \ref{lem:max_need_equiv_appendix}. State monotonicity ensures that if convergence occurs from the worst-case state, it also occurs from any other state.
\end{proof}

\begin{corollary}[Uniform Hub, No Tolerance]
\label{thm:uniform-notau}
Setting $\tau(v)=0$ for all $v$ in Theorem \ref{thm:uniform-tau}, one-step all-\textsc{Glory} is guaranteed if and only if $W \ge \max_{v\ne g}\mathrm{rest\_weight}(v)$.
\end{corollary}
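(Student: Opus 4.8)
The plan is to derive this corollary as a direct specialisation of Theorem~\ref{thm:uniform-tau}, since the corollary is exactly the statement obtained by fixing $\tau \equiv 0$. First I would invoke Theorem~\ref{thm:uniform-tau}, which guarantees one-step all-\textsc{Glory} convergence from any initial state if and only if $W \ge \maxneed(\tau) = \max_{v\ne g}\bigl(\mathrm{rest\_weight}(v) - \tau(v)\bigr)$. The only remaining task is to evaluate $\maxneed(\tau)$ under the substitution $\tau(v) = 0$ for all $v$.

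Substituting $\tau(v) = 0$ into the definition, each term $\mathrm{rest\_weight}(v) - \tau(v)$ collapses to $\mathrm{rest\_weight}(v)$, so that
\[
 \maxneed(0) = \max_{v\ne g}\bigl(\mathrm{rest\_weight}(v) - 0\bigr) = \max_{v\ne g}\mathrm{rest\_weight}(v).
\]
Plugging this back into the threshold condition immediately yields the desired characterisation $W \ge \max_{v\ne g}\mathrm{rest\_weight}(v)$, with the biconditional inherited verbatim from the parent theorem.

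There is genuinely no obstacle here, as the corollary is a purely syntactic instantiation of the more general result rather than an independent claim. The only point warranting even minor care is confirming that the zero tolerance function is a legitimate choice under the model's typing, namely that $\tau \equiv 0$ lies in the admissible class $\tau : V \to \mathbb{N}$; since $0 \in \mathbb{N}$, this holds trivially. Consequently no appeal to state monotonicity, the conservation law, or the worst-case analysis need be re-established, as all of these are already packaged inside Theorem~\ref{thm:uniform-tau}.
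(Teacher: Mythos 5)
Your proposal is correct and matches the paper's intent exactly: the corollary is stated as a direct instantiation of Theorem~\ref{thm:uniform-tau} at $\tau \equiv 0$, and your substitution $\maxneed(0) = \max_{v\ne g}\mathrm{rest\_weight}(v)$ together with the admissibility check $0 \in \mathbb{N}$ is all that is required. No further argument is needed.
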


\paragraph{Pointwise Deg–Max Bound.} Let $\mathrm{indeg}_{\neg g}(v)$ be the number of non-hub in-neighbours of $v$, and $\mathrm{max\_in}_{\neg g}(v)$ be the maximum inbound edge weight to $v$ from a non-hub. We have $\mathrm{rest\_weight}(v) \le \mathrm{indeg}_{\neg g}(v) \cdot \mathrm{max\_in}_{\neg g}(v)$. This gives a strictly tighter worst-case bound than the classical global product:
\[
 \maxrest \le \max_{v\ne g} \Big(\mathrm{indeg}_{\neg g}(v)\cdot \mathrm{max\_in}_{\neg g}(v)\Big) \le \mathrm{indeg}_{\text{global}}\cdot \mathrm{wmax}_{\text{global}}.
\]

\section{Seeding and Multi-Hub Variants}
Let $H\subseteq V$ be a seed set of nodes initially forced to \textsc{Glory}. Define your collective influence as $\mathrm{hub\_H}(v) := \sum_{u\in H} w(u,v)$ and your opposition as $\mathrm{rest\_H}(v) := \sum_{u\notin H} w(u,v)$.

\begin{theorem}[Exact One-Step Criterion under Seeded Forcing]
\label{thm:seed-exact}
For any seed set $H$, one-step convergence to all-\textsc{Glory} is guaranteed if and only if for all $v \notin H$:
\[
 \mathrm{hub\_H}(v)+\tau(v)\ \ge\ \mathrm{rest\_H}(v).
\]
\end{theorem}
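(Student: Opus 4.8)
The plan is to mirror the structure of the proof of Theorem~\ref{thm:uniform-tau}, treating the entire seed set $H$ as a generalised hub. The essential observation is that forcing $H$ to \textsc{Glory} is the direct analogue of forcing the single hub $g$, so $\mathrm{hub\_H}(v)$ and $\mathrm{rest\_H}(v)$ play exactly the roles of $\mathrm{hub\_weight}(v)$ and $\mathrm{rest\_weight}(v)$. Note that Score Conservation (Lemma~\ref{lem:score_conservation}) and the majority-form equivalence (Theorem~\ref{thm:majority-tau}) are insensitive to \emph{which} nodes are forced, so both carry over verbatim with $H$ in place of $\{g\}$.

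For the $(\Leftarrow)$ direction I would first identify the worst-case initial state $s_\bot$ in which every non-seed node $v\notin H$ begins in \textsc{Gnash}. After forcing $H$ to \textsc{Glory}, the only \textsc{Glory} nodes are exactly those in $H$, so $\mathrm{score}_G(s_\bot,v)=\mathrm{hub\_H}(v)$ and $\mathrm{score}_N(s_\bot,v)=\mathrm{rest\_H}(v)$. Applying the \textsc{TieGlory} rule $\mathrm{score}_N\le\mathrm{score}_G+\tau(v)$, the hypothesis $\mathrm{hub\_H}(v)+\tau(v)\ge\mathrm{rest\_H}(v)$ is precisely the condition that forces $v$ to \textsc{Glory}. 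State monotonicity (Lemma~\ref{lem:state_monotonicity_appendix}) then lifts this to every initial state: since $s_\bot$ has the componentwise fewest \textsc{Glory} nodes after forcing, any richer configuration only weakly increases each $\mathrm{score}_G$, so every $v\notin H$ still transitions to \textsc{Glory}.

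For the $(\Rightarrow)$ direction I would argue by contrapositive. Suppose the inequality fails for some $v_0\notin H$, i.e.\ $\mathrm{hub\_H}(v_0)+\tau(v_0)<\mathrm{rest\_H}(v_0)$. Starting again from $s_\bot$, the same score identities give $\mathrm{score}_N(s_\bot,v_0)>\mathrm{score}_G(s_\bot,v_0)+\tau(v_0)$, so $v_0$ remains in \textsc{Gnash} after one step, exhibiting an explicit initial state from which one-step all-\textsc{Glory} convergence fails.

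The main obstacle I anticipate is the careful bookkeeping around the simultaneous seeded forcing: unlike the single-hub case, several nodes are forced at once, so I must verify that the ``all non-seed in \textsc{Gnash}'' configuration genuinely realises the minimal post-forcing \textsc{Glory}-set, and that the monotonicity comparison ranges only over the non-seed coordinates (the seed coordinates being pinned to \textsc{Glory} in every comparison state). Once the worst-case identities $\mathrm{score}_G(s_\bot,v)=\mathrm{hub\_H}(v)$ and $\mathrm{score}_N(s_\bot,v)=\mathrm{rest\_H}(v)$ are pinned down, the monotonicity argument transfers with no further friction.
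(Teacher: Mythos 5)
Your proposal is correct and takes essentially the same route as the paper: both directions hinge on the all-\textsc{Gnash}-outside-$H$ initial state with the exact identities $\mathrm{score}_G(s,v)=\mathrm{hub\_H}(v)$ and $\mathrm{score}_N(s,v)=\mathrm{rest\_H}(v)$ (the paper's Lemma~\ref{lem:all_gnash_scores_appendix}), combined with monotone score bounds. The only cosmetic differences are that you argue $(\Rightarrow)$ by contrapositive where the paper argues it directly, and you route $(\Leftarrow)$ through the worst-case state plus an explicit appeal to state monotonicity, whereas the paper bounds $\mathrm{score}_G(s,v)\ge\mathrm{hub\_H}(v)$ and $\mathrm{score}_N(s,v)\le\mathrm{rest\_H}(v)$ directly for an arbitrary state --- logically the same argument.
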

\begin{proof}
\begin{itemize}
    \item[($\Leftarrow$)] Assume that the condition holds. For any node $v \notin H$ and any state $s$, its score from \textsc{Glory} nodes after forcing $H$ is $\mathrm{score}_G(s,v) \ge \mathrm{hub\_H}(v)$, while its score from \textsc{Gnash} is $\mathrm{score}_N(s,v) \le \mathrm{rest\_H}(v)$. The update condition $\mathrm{score}_N \le \mathrm{score}_G + \tau(v)$ is satisfied because $\mathrm{rest\_H}(v) \le \mathrm{hub\_H}(v) + \tau(v) \le \mathrm{score}_G(s,v) + \tau(v)$.
    \item[($\Rightarrow$)]  Assume that one-step convergence holds. Consider the initial state where all nodes $u \notin H$ are \textsc{Gnash}. For any $v \notin H$, the scores are exactly $\mathrm{score}_G(s,v) = \mathrm{hub\_H}(v)$ and $\mathrm{score}_N(s,v) = \mathrm{rest\_H}(v)$ (see Lemma \ref{lem:all_gnash_scores_appendix}). Since $v$ must become \textsc{Glory}, the update condition $\mathrm{score}_N \le \mathrm{score}_G + \tau(v)$ implies $\mathrm{rest\_H}(v) \le \mathrm{hub\_H}(v) + \tau(v)$.
\end{itemize}
\end{proof}

\section{Asynchronous Schedules: One-Pass Fairness}
Let $\mathrm{run\_sched\_tau}(s,\mathrm{sched})$ denote a single sweep in which the nodes are updated sequentially.

\begin{theorem}[One-Pass Fairness]
\label{thm:async-one-pass}
If the one-step domination condition $\mathrm{hub\_weight}(v)+\tau(v)\ge \mathrm{rest\_weight}(v)$ holds for all $v\ne g$, and a finite schedule visits every non-hub node at least once, then after one pass every $v\ne g$ will be in the \textsc{Glory} state, regardless of the starting configuration.
\end{theorem}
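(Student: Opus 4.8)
The plan is to prove this by induction on the sequence of updates in the schedule, tracking a monotonicity invariant. The key observation is that the one-step domination condition $\mathrm{hub\_weight}(v)+\tau(v)\ge \mathrm{rest\_weight}(v)$ is exactly the worst-case condition from Theorem~\ref{thm:uniform-tau}: it guarantees that a node flips to \textsc{Glory} even when \emph{all} its non-hub neighbours are in \textsc{Gnash}. First I would set up the asynchronous sweep as a finite sequence of single-node updates $v_1, v_2, \dots, v_k$, where each $v_i$ is recomputed using the current (partially updated) configuration, and the hub $g$ remains forced to \textsc{Glory} throughout.

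\medskip

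\noindent
The central claim I would establish is the invariant that \emph{once a node is set to \textsc{Glory} during the pass, it never reverts}. To see this, consider any node $v$ at the moment it is updated. By the domination hypothesis, $\mathrm{hub\_weight}(v) + \tau(v) \ge \mathrm{rest\_weight}(v) \ge \mathrm{score}_N(s,v)$ for the current configuration $s$, since $\mathrm{score}_N(s,v) \le \mathrm{rest\_weight}(v)$ always holds (the \textsc{Gnash} score can never exceed the total non-hub inbound mass). Because the hub contributes $\mathrm{hub\_weight}(v)$ to the \textsc{Glory} score at all times, we have $\mathrm{score}_G(s,v) \ge \mathrm{hub\_weight}(v)$, so the \textsc{TieGlory} update condition $\mathrm{score}_N(s,v) \le \mathrm{score}_G(s,v) + \tau(v)$ is satisfied. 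Hence $v$ is set to \textsc{Glory} at its update step, irrespective of the states of its neighbours at that moment.

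\medskip

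\noindent
Given this per-node guarantee, the theorem follows directly: since the schedule visits every non-hub node at least once, each $v\ne g$ is set to \textsc{Glory} at its (last) visit, and by the no-reversion invariant it remains in \textsc{Glory} for the rest of the pass. After the final update, every non-hub node is therefore in \textsc{Glory}. The crucial strengthening over the synchronous case is that the domination condition is \emph{uniform}: it dominates the \textsc{Gnash} score in \emph{every} configuration, not merely the all-\textsc{Gnash} one, which is precisely what decouples the argument from the order of updates and from the transient states of neighbours.

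\medskip

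\noindent
The main obstacle I anticipate is making the no-reversion invariant fully rigorous under arbitrary interleaving: one must confirm that the bound $\mathrm{score}_N(s,v) \le \mathrm{rest\_weight}(v)$ holds at the \emph{exact} configuration seen at each update, and that forcing the hub to \textsc{Glory} is maintained as a standing invariant across all intermediate configurations (so the $\mathrm{hub\_weight}(v)$ contribution to $\mathrm{score}_G$ is never lost). In a mechanised setting this amounts to a clean induction over the schedule list with a strengthened hypothesis asserting both the hub-forcing invariant and monotonicity of the \textsc{Glory} set; the arithmetic at each step is then immediate from the domination hypothesis.
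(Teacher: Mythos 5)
Your proposal is correct and follows essentially the same route as the paper's proof: both rest on the key observation that the domination condition makes every single-node update yield \textsc{Glory} in \emph{any} configuration (since $\mathrm{score}_N(s,v)\le\mathrm{rest\_weight}(v)$ and $\mathrm{score}_G(s,v)\ge\mathrm{hub\_weight}(v)$ once the hub is forced), so schedule coverage plus no-reversion finishes the argument. If anything, your derivation of no-reversion directly from this uniform flip guarantee is slightly more self-contained than the paper's, which appeals to state monotonicity for that step.
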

\begin{proof}
The domination condition implies that for any state $s$, $\mathrm{next\_heaven\_tau}(s,v) = \textsc{Glory}$ for all $v \neq g$. When a node $v$ is updated according to the schedule, its state becomes \textsc{Glory}. By state monotonicity, once a node becomes \textsc{Glory}, it cannot revert to \textsc{Gnash} in subsequent updates within the same pass, as the set of \textsc{Glory} nodes only grows. Since the schedule covers all non-hub nodes, each will be updated to \textsc{Glory} at its turn and will remain so.
\end{proof}

\section{Experiments}
\label{sec:experiments}
We summarise the experiments that validate the theoretical results; see Figure~\ref{fig:growth-summary}.

\paragraph{Regular Families.} Rings with $k$-nearest neighbours: The exact threshold is $W^*=2k$. 2D grid: With degree $d=4$, the theoretical threshold $W^*=4$ is matched experimentally (Fig~\ref{fig:growth-summary}A).

\paragraph{Heterogeneous Weighted Graphs.} On an adversarially constructed graph, our pointwise deg–max bound is exact, while the classical bound is off by a factor of 200 (Fig~\ref{fig:growth-summary}B):
\[
 \maxrest = 800,\qquad
 \text{pointwise deg--max} = 800\ (\text{exact}),\qquad
 \text{classical} = 160{,}000\ (\times 200).
\]

\paragraph{Seeding and Multi-Hubs.} On a ring with $W^*=6$, a single hub at $W=5$ fails. However, splitting a budget of $W=6$ between two hubs ($3+3$) or three hubs ($2+2+2$) and seeding them achieves a one-step all-\textsc{Glory}, consistent with Theorem \ref{thm:seed-exact} (Fig~\ref{fig:growth-summary}C).

\paragraph{Asynchronous Fairness.} On a ring and a grid satisfying the one-step condition, schedules that cover all non-hubs in one pass yield all-\textsc{Glory} across random orderings, matching Theorem \ref{thm:async-one-pass} (Fig~\ref{fig:growth-summary}D).

\begin{figure}[t]
\centering
\includegraphics[width=\linewidth]{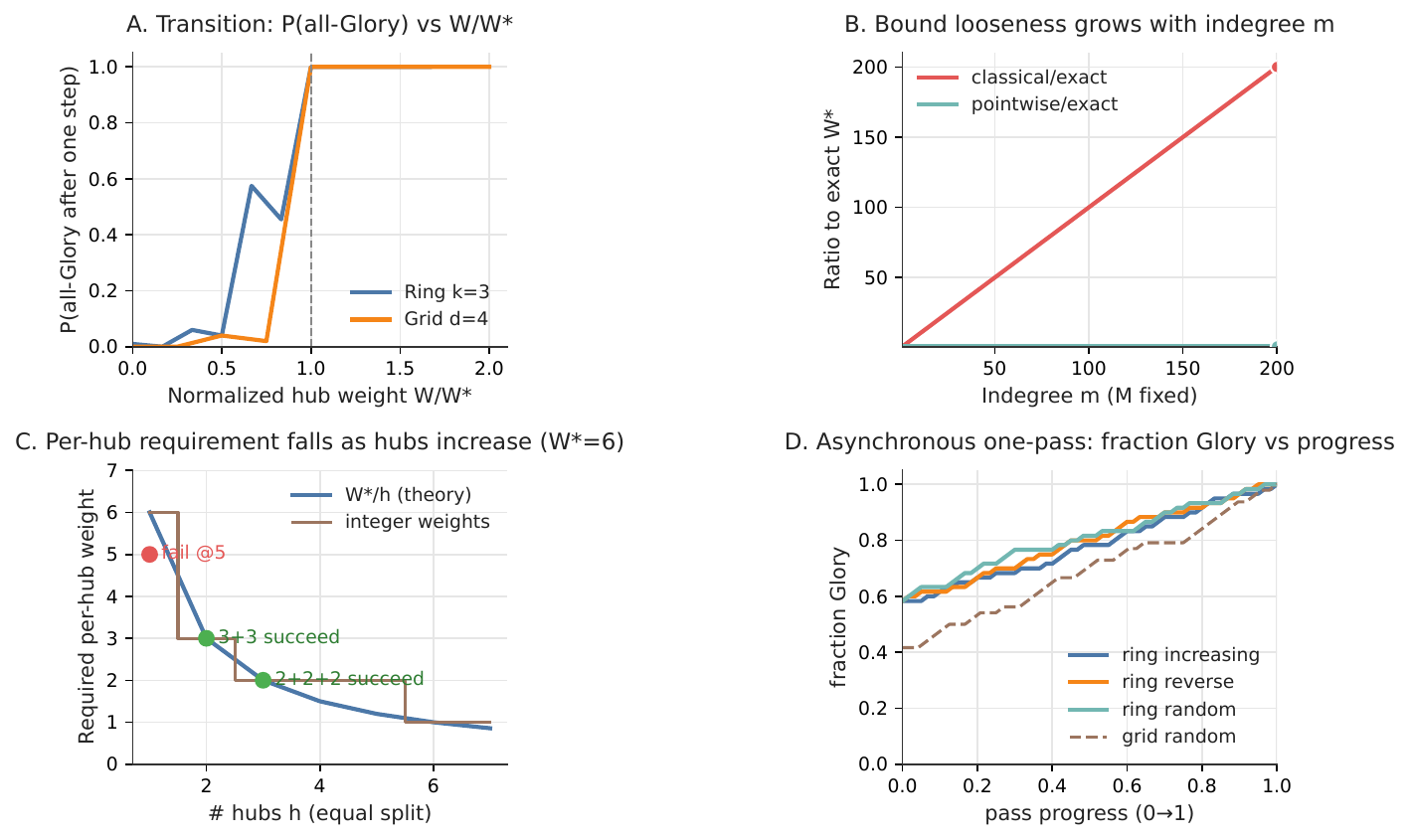}
\caption{Dynamics and scale effects. (A) One-step success probability rises sharply at $W/W^*=1$ on rings and grids. (B) On heterogeneous graphs, the classical deg$\cdot$wmax bound diverges with indegree $m$ while the pointwise bound stays tight. (C) Equal multi-hub splits reduce per-hub weight as $1/h$; annotated examples match the text. (D) Under domination, asynchronous one-pass schedules monotonically drive the fraction of Glory to 1.}
\label{fig:growth-summary}
\end{figure}

\section{Related Work and Limitations}

The Heaven-Hell model aligns with the broader literature on threshold-based dynamics in networks, including weighted majority dynamics \cite{Peleg1998} and bootstrap percolation \cite{Chalupa1979, AdlerLev1991}. Our conservation law and majority-form views extend monotone system analyses \cite{Smith1995}, providing exact per-node thresholds that contrast with probabilistic bounds in influence maximisation \cite{Kempe2003}. Unlike heuristics in complex contagion \cite{CentolaMacy2007}, our analysis provides exact conditions for deterministic one-step convergence, bridging theoretical guarantees with practical refinements absent in prior hub-based consensus models \cite{AghanyaLeemans2025}.

Our one-step guarantees, however, have their bounds and limitations. In particular, one open and important area for further archaeological work is extending these exact results to optimal multi-step policies, or extensible-time varying weights or stochastic observations. The per-node thresholds, although exact in the case of our hub-forcing semantics, would be overly conservative if the hub itself were forced in some way. We offer sufficient conditions for optimal seeded splits; however, the challenge of discovering optimal seed budgets remains an important open combinatorial challenge. Finally, the experiments used in this paper assess canonical topologies. Stress tests on the algorithm, such as those on dynamically varying networks or with real social data, would be helpful for future study in these areas.

\section{Conclusion}
This work shows that the one-step threshold for Heaven-Hell consensus scales robustly to tolerances, tie policies, seeding, multiple hubs, and asynchronous schedules. A simple conservation-law perspective yields both exact thresholds and sharp, interpretable bounds. Experiments demonstrate their tightness and practical utility.

\paragraph{Artefact.} All results are machine-checked in Coq. The accompanying source code includes the formal proofs (e.g., \texttt{ScaleLaw}, \texttt{Monotonicity}) and Python scripts (\texttt{showcase.py}) to reproduce all experiments.

\newpage
\bibliographystyle{unsrt}
\bibliography{ref}
\newpage

\appendix
\section{Proofs of Supporting Lemmas}

This appendix provides formal proofs for technical lemmas referenced in the main paper.

\begin{lemma}[Equivalence of Per-Node and Aggregated Thresholds]
\label{lem:max_need_equiv_appendix}
Let $W$ be the uniform hub weight. The per-node condition $(\forall v \neq g, W + \tau(v) \ge \mathrm{rest\_weight}(v))$ is equivalent to the aggregated condition $W \ge \max_{v \neq g}(\mathrm{rest\_weight}(v) - \tau(v))$.
\end{lemma}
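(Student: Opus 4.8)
The plan is to prove the equivalence in two stages: first reduce the per-node inequality to a tolerance-free form $W \ge f(v)$, and then convert the universally quantified family of such inequalities into a single inequality against their maximum, which is exactly $\maxneed(\tau)$.

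First I would fix an arbitrary non-hub node $v$ and establish the pointwise equivalence
\[
 W + \tau(v) \ge \mathrm{rest\_weight}(v)
 \quad\Longleftrightarrow\quad
 W \ge \mathrm{rest\_weight}(v) - \tau(v),
\]
where, because every quantity lives in $\mathbb{N}$, the subtraction on the right is truncated (monus). This step requires a case split on whether $\mathrm{rest\_weight}(v) \ge \tau(v)$. In the first case the monus coincides with genuine subtraction and the equivalence is the standard additive rearrangement $a + b \ge c \iff a \ge c - b$. In the second case the right-hand difference collapses to $0$, so $W \ge 0$ holds trivially, while on the left $\tau(v) > \mathrm{rest\_weight}(v)$ forces $W + \tau(v) \ge \tau(v) > \mathrm{rest\_weight}(v)$; hence both sides are unconditionally true and the equivalence again holds.

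With the tolerance absorbed, the second stage is a purely order-theoretic fact about finite maxima. Writing $f(v) := \mathrm{rest\_weight}(v) - \tau(v)$ (truncated), I would show that $\big(\forall v \ne g,\ W \ge f(v)\big)$ is equivalent to $W \ge \max_{v\ne g} f(v) = \maxneed(\tau)$. The forward direction uses that the maximum is an upper bound, so $f(v) \le \maxneed(\tau) \le W$ for every $v$; the reverse direction uses that the maximum is the least upper bound of the finite family $\{f(v)\}_{v\ne g}$, so any common upper bound $W$ of the family dominates it. If the non-hub vertex set is empty, both statements hold vacuously under the convention $\max\emptyset = 0$, so no special case is lost.

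The main obstacle is the bookkeeping around truncated subtraction: the naive rearrangement $a + b \ge c \iff a \ge c - b$ is only valid for integer subtraction, and over $\mathbb{N}$ one must verify that the monus does not silently break the equivalence. As sketched above, this is resolved by the single case split on $\mathrm{rest\_weight}(v)$ versus $\tau(v)$, after which the max-elimination step is a routine induction over the finite vertex list in the Coq development.
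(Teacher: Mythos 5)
Your proof is correct and takes essentially the same route as the paper's: a pointwise rearrangement of each per-node inequality, followed by the standard least-upper-bound property of the finite maximum to collapse the universally quantified family into a single comparison with $\maxneed(\tau)$. Your additional case split for truncated subtraction over $\mathbb{N}$ (and the $\max\emptyset = 0$ convention) is careful bookkeeping that the paper's proof silently elides, but it does not change the structure of the argument.
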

\begin{proof}
\begin{itemize}
    \item[($\Rightarrow$)] Assume $\forall v \neq g, W + \tau(v) \ge \mathrm{rest\_weight}(v)$. This is equivalent to $\forall v \neq g, W \ge \mathrm{rest\_weight}(v) - \tau(v)$. Since $W$ is greater than or equal to every element in the set $\{\mathrm{rest\_weight}(v) - \tau(v)\}_{v \neq g}$, it must be greater than or equal to the maximum element of that set. Thus, $W \ge \max_{v \neq g}(\mathrm{rest\_weight}(v) - \tau(v))$.
    \item[($\Leftarrow$)] Assume $W \ge \max_{v \neq g}(\mathrm{rest\_weight}(v) - \tau(v))$. By the definition of maximum, for any specific $v_0 \neq g$, we have $\max_{v \neq g}(\mathrm{rest\_weight}(v) - \tau(v)) \ge \mathrm{rest\_weight}(v_0) - \tau(v_0)$. By transitivity, $W \ge \mathrm{rest\_weight}(v_0) - \tau(v_0)$, which rearranges to $W + \tau(v_0) \ge \mathrm{rest\_weight}(v_0)$. Since this holds for any arbitrary $v_0$, the per-node condition is satisfied.
\end{itemize}
\end{proof}

\begin{lemma}[State Monotonicity of Scores]
\label{lem:state_monotonicity_appendix}
Let states $s, t$ be such that for all $u$, if $s(u)=\textsc{Glory}$ then $t(u)=\textsc{Glory}$. Then for any node $v$, $\mathrm{score}_G(s,v) \le \mathrm{score}_G(t,v)$ and $\mathrm{score}_N(s,v) \ge \mathrm{score}_N(t,v)$.
\end{lemma}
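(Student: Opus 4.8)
The plan is to reduce the claim to a termwise comparison of indicator functions, exploit the nonnegativity of the weights to lift that comparison to the weighted sums, and then read off the \textsc{Gnash} inequality for free from the conservation law of Lemma~\ref{lem:score_conservation}. Before any of that, I would pin down a subtlety in the definitions: the scores are evaluated on the post-forcing configurations, not on $s$ and $t$ directly, so I must first check that the hypothesis ``$s(u)=\textsc{Glory}\Rightarrow t(u)=\textsc{Glory}$ for all $u$'' survives forcing the hub $g$ to \textsc{Glory}. Writing $s'$ and $t'$ for the forced states, this is immediate: at the hub both $s'(g)$ and $t'(g)$ equal \textsc{Glory}, so the implication holds there trivially, while at every $u\ne g$ forcing leaves the state untouched, giving $s'(u)=s(u)$ and $t'(u)=t(u)$, so the hypothesis transfers verbatim. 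Thus $s'(u)=\textsc{Glory}$ implies $t'(u)=\textsc{Glory}$ for every $u$.

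With domination secured on the forced states, I would compare the \textsc{Glory} scores one term at a time. For each $u$, the only way the indicator inequality $\mathbb{I}[s'(u)=\textsc{Glory}]\le\mathbb{I}[t'(u)=\textsc{Glory}]$ can fail is the configuration (left side $1$, right side $0$), which is exactly what the preserved implication forbids; hence the inequality holds pointwise. Multiplying through by $w(u,v)\ge 0$ preserves it, and summing over all $u$ gives $\mathrm{score}_G(s,v)\le\mathrm{score}_G(t,v)$. For the \textsc{Gnash} bound I would not repeat the work but instead invoke Lemma~\ref{lem:score_conservation}: since $\mathrm{score}_N(\cdot,v)=\mathrm{total\_in}(v)-\mathrm{score}_G(\cdot,v)$ with $\mathrm{total\_in}(v)$ a state-independent constant, the $\mathrm{score}_G$ inequality flips directly to $\mathrm{score}_N(s,v)\ge\mathrm{score}_N(t,v)$.

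The computation itself is routine; the step I expect to be the real care-point is the hub-forcing bookkeeping, making sure the monotonicity hypothesis stated on the pair $(s,t)$ actually governs the forced pair $(s',t')$ that enters the score definitions. The other load-bearing assumption is the nonnegativity of $w:V\times V\to\mathbb{N}$, which is precisely what converts the pointwise indicator inequality into the aggregate score inequality; over signed weights the statement would be false, so I would keep that hypothesis explicit throughout.
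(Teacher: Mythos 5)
Your proof is correct and takes essentially the same route as the paper's: force the hub, observe that the \textsc{Glory}-domination hypothesis transfers to the forced states, lift the comparison to the weighted sums using nonnegativity of $w$, and obtain the $\mathrm{score}_N$ inequality from the conservation law (Lemma~\ref{lem:score_conservation}). The only differences are presentational — you argue via a pointwise indicator inequality and spell out the hub-forcing bookkeeping explicitly, where the paper phrases the same step as summing over a superset of nonnegative terms.
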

\begin{proof}
Let $s'$ and $t'$ be the states after forcing the hub to \textsc{Glory}. The premise implies that the set of \textsc{Glory} nodes under $s'$ is a subset of those under $t'$. The score $\mathrm{score}_G(s,v) = \sum_{u} w(u,v) \mathbb{I}[s'(u) = \textsc{Glory}]$ is a sum of non-negative terms. Since the sum for $\mathrm{score}_G(t,v)$ is over a superset of terms (or the same set), we have $\mathrm{score}_G(s,v) \le \mathrm{score}_G(t,v)$. The second inequality, $\mathrm{score}_N(s,v) \ge \mathrm{score}_N(t,v)$, follows directly from this and the conservation law (Lemma \ref{lem:score_conservation}): $\mathrm{total\_in}(v) - \mathrm{score}_G(s,v) \ge \mathrm{total\_in}(v) - \mathrm{score}_G(t,v)$.
\end{proof}

\begin{lemma}[Scores under All-\textsc{Gnash} Initialization]
\label{lem:all_gnash_scores_appendix}
Let $H$ be a seed set. In an initial state where $s(u) = \textsc{Gnash}$ for all $u \notin H$, the scores for a node $v \notin H$ after forcing $H$ to \textsc{Glory} are exactly $\mathrm{score}_G(s,v) = \mathrm{hub\_H}(v)$ and $\mathrm{score}_N(s,v) = \mathrm{rest\_H}(v)$.
\end{lemma}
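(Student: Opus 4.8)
The plan is to reduce everything to the defining weighted-indicator sums used in Lemma~\ref{lem:score_conservation}, and to exploit the fact that forcing $H$ deterministically fixes the \textsc{Glory}/\textsc{Gnash} partition of $V$. First I would pin down the forced state $s'$ explicitly: by the seeded-forcing semantics, $s'(u)=\textsc{Glory}$ for every $u\in H$, while $s'(u)=s(u)$ for every $u\notin H$. Combining this with the hypothesis that $s(u)=\textsc{Gnash}$ for all $u\notin H$, the set of \textsc{Glory} nodes under $s'$ is exactly $H$ and the set of \textsc{Gnash} nodes is exactly $V\setminus H$. This is the only place where the all-\textsc{Gnash} initialization is used, and it is precisely what turns the inequalities $\mathrm{score}_G\ge\mathrm{hub\_H}$ and $\mathrm{score}_N\le\mathrm{rest\_H}$ (which suffice for the soundness direction of Theorem~\ref{thm:seed-exact}) into equalities.

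Next I would expand the two scores and split each sum along the disjoint partition $V = H \sqcup (V\setminus H)$. Starting from
\[
 \mathrm{score}_G(s,v)=\sum_{u\in V} w(u,v)\,\mathbb{I}[s'(u)=\textsc{Glory}],
\]
the indicator is $1$ precisely on $H$ and $0$ on $V\setminus H$, so the sum collapses to $\sum_{u\in H} w(u,v)$, which is the definition of $\mathrm{hub\_H}(v)$. The identical argument with the \textsc{Gnash} indicator, now supported exactly on $V\setminus H$, gives $\mathrm{score}_N(s,v)=\sum_{u\notin H} w(u,v)=\mathrm{rest\_H}(v)$. As a consistency check, the second equality also follows from the first together with the conservation law, since $\mathrm{hub\_H}(v)+\mathrm{rest\_H}(v)=\mathrm{total\_in}(v)$ holds by construction.

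There is no genuine obstacle here; the statement is in essence a bookkeeping identity, and the only point requiring care is the precise meaning of forcing. One must confirm that the forcing operator overwrites the state on $H$ and leaves every other node untouched — including the target node $v$ itself, which lies outside $H$ — so that $v$'s own contribution through any self-loop $w(v,v)$ is counted consistently on the \textsc{Gnash}/$\mathrm{rest\_H}$ side. Once the \textsc{Glory} set is identified with $H$ and the \textsc{Gnash} set with $V\setminus H$, both equalities are immediate from additivity of finite sums over a disjoint union, which is exactly how the mechanised Coq development discharges the goal.
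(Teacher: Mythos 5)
Your proof is correct and follows essentially the same route as the paper's: identify the post-forcing state $s'$ as \textsc{Glory} exactly on $H$ and \textsc{Gnash} exactly on $V\setminus H$, then collapse the indicator sums to $\mathrm{hub\_H}(v)$ and $\mathrm{rest\_H}(v)$ respectively. Your additional remarks (the conservation-law consistency check and the care about self-loops at $v\notin H$) are sound embellishments but do not change the argument.
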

\begin{proof}
After forcing nodes in $H$ to \textsc{Glory}, the state $s'$ is such that $s'(u) = \textsc{Glory}$ if $u \in H$ and $s'(u) = \textsc{Gnash}$ if $u \notin H$. By definition, the scores for node $v$ are computed from this state $s'$.
\begin{align*}
\mathrm{score}_G(s,v) &= \sum_{u \in V} w(u,v) \mathbb{I}[s'(u) = \textsc{Glory}] = \sum_{u \in H} w(u,v) = \mathrm{hub\_H}(v). \\
\mathrm{score}_N(s,v) &= \sum_{u \in V} w(u,v) \mathbb{I}[s'(u) = \textsc{Gnash}] = \sum_{u \notin H} w(u,v) = \mathrm{rest\_H}(v). \qedhere
\end{align*}
\end{proof}

\end{document}